\newcommand{\bb}{\mathbf{b}}
\newcommand{\cb}{\mathbf{c}}
\newcommand{\tb}{\mathbf{t}}
\newcommand{\xb}{\mathbf{x}}
\newcommand{\hb}{\mathbf{h}}
\newcommand{\nn}{\nonumber}
\newcommand{\bea}{\begin{eqnarray}}
\newcommand{\ena}{\end{eqnarray}}
\def\bel{\begin{eqnarray}}
\def\enl{\end{eqnarray}}
\newcommand{\be}{\begin{eqnarray*}}
\newcommand{\en}{\end{eqnarray*}}
\newcommand{\ba}{\begin{array}}
\newcommand{\ea}{\end{array}}
\newcommand{\C}{{\mathbb C}}
\newcommand{\Z}{{\mathbb Z}}
\newcommand{\slth}{\widehat{\mathfrak{sl}}_2}
\newcommand{\End}{\mathop{\rm End}}
\newcommand{\la}{\lambda}
\newcommand{\al}{\alpha}
\newcommand{\z}{\zeta}
\newcommand{\e}{\epsilon}
\numberwithin{equation}{section}
\newtheorem{thm}{Theorem}[section]
\newtheorem{prop}[thm]{Proposition}
\newtheorem{lem}[thm]{Lemma}
\newtheorem{cor}[thm]{Corollary}
\newcommand{\cW}{\mathcal{W}}
\newcommand{\cF}{\mathcal{F}}
\newcommand{\Wal}{\cW^{(\al)}}
\newcommand{\bS}{\mathbb{S}}
\newcommand{\supp}{\mathop{\rm supp}}
\newcommand{\Rcb}{{\widetilde{\mathbb{R}}^\vee}}
\begin{document}

\begin{title}[Completeness of a fermionic basis]
{Completeness of a fermionic basis 
in the homogeneous XXZ model}
\end{title}
\date{\today}
\author{H.~Boos, M.~Jimbo, T.~Miwa and  F.~Smirnov}
\address{HB: Physics Department, University of Wuppertal, D-42097,
Wuppertal, Germany}\email{boos@physik.uni-wuppertal.de}
\address{MJ: Graduate School of Mathematical Sciences, The
University of Tokyo, Tokyo 153-8914, Japan}
\email{jimbomic@ms.u-tokyo.ac.jp}
\address{TM: Department of 
Mathematics, Graduate School of Science,
Kyoto University, Kyoto 606-8502, 
Japan}\email{tetsuji@math.kyoto-u.ac.jp}
\address{FS\footnote{Membre du CNRS}: Laboratoire de Physique Th{\'e}orique et
Hautes Energies, Universit{\'e} Pierre et Marie Curie,
Tour 16 1$^{\rm er}$ {\'e}tage, 4 Place Jussieu
75252 Paris Cedex 05, France}\email{smirnov@lpthe.jussieu.fr}

\begin{abstract}
With the aid of the creation operators introduced in our previous works, 
we show how to construct a basis of the space 
of quasi-local operators for the homogeneous XXZ chain.
\end{abstract}

\maketitle

\bigskip


\section{Introduction}\label{Intro}

The present note is a supplement to our previous papers \cite{HGS}--\cite{HGSIII} 
on the fermionic structure in the XXZ model. 
Our aim here is to give a proof of the completeness statement
announced in \cite{HGSIII} 
concerning the fermionic basis of quasi-local operators. 

The space of states of the XXZ model on the infinite lattice 
is formally the tensor product $\otimes_{j=-\infty}^\infty V_j$, 
where $V_j$ is a copy of the spin $1/2$ evaluation module
of $U'_q(\slth)$. Of interest is the expected values of local operators, 
i.e., those which act as identity on $V_j$ for all but a finite number of $j$. 
A quasi-local operator $X$ is one which acts on almost all $V_j$ as 
\be
q^{2(\al-s) S(0)},\quad 
S(k)=\frac{1}{2}\sum_{j=-\infty}^k \sigma^3_j\,.  
\en
Here $q^\al$ is a parameter and $s\in \Z$ is the spin 
of $X$. In other words, $X$ is quasi-local with spin $s$ if 
\bea
X=q^{2(\al-s) S(0)}\mathcal{O}  
\label{quasi-local}
\ena
where $\mathcal{O}$ is  local and $[S(\infty),\mathcal{O}]=s\, \mathcal{O}$. 
The space of all quasi-local operators \eqref{quasi-local} with spin $s$ 
is denoted by  $\cW^{(\al)}_s$.     

In \cite{HGS,HGSII}, we have introduced 
a set of linear operators 
\bea
\tb^*_p,\, \bb_p,\, \cb_p,\, \bb^*_p,\, \cb^*_p\,
\quad(p\ge 1),
\label{tbcbc}
\ena
which act on $\cW^{(\al)}=\oplus_{s\in\Z}\cW^{(\al)}_s$.  
Among them, $\bb_p,\cb_p$ annihilate the 
`vacuum' $q^{2\al S(0)}$, while
the others can be used to create a family of 
elements of $\cW^{(\al)}$:
\bea
&&(\tb^*_1)^p \tb^*_{i_1}\cdots\tb^*_{i_r}
\bb^*_{j_1}\cdots\bb^*_{j_s}
\cb^*_{k_1}\cdots\cb^*_{k_t}(q^{2\al S(0)})
\label{monomial}\\
&&
(i_1\ge\cdots\ge i_r\ge 2,\ j_1>\cdots>j_s\ge 1, \
k_1>\cdots>k_t\ge 1,\ p\in\Z,\ r,s,t\ge 0)\,.
\nn
\ena
The main result of \cite{HGSII} states that, 
at zero-temperature, 
the normalized expected values of the generating functions of 
\eqref{monomial} can be described explicitly in terms 
of determinants. This result has been generalized 
in \cite{HGSIII} to the case of a non-zero temperature. 
In principle, one has therefore 
a complete knowledge about the expected value
of an arbitrary quasi-local operator,  
provided that \eqref{monomial} gives a basis of $\cW^{(\al)}$. 

Using the (anti-)commutation relations among 
\eqref{tbcbc}, 
it is fairly straightforward to show that the elements 
 \eqref{monomial} are linearly independent 
 (see Corollary \ref{independence2} below). 
Hence the issue is that of the completeness, i.e., 
whether \eqref{monomial} span the whole space $\cW^{(\al)}$. 
We show here that this is indeed the case. 
Our proof rests on the 
commutation relations whose proof is partly unfinished, 
so the completeness is settled modulo this point.

The plan of the text is as follows. 
In Section \ref{sec2}, we give a summary 
of known facts about the operators \eqref{tbcbc}. 
We also introduce an alternative set of creation operators
 $\overline{\bb}^*_p,\overline{\cb}^*_p$.  
While $\bb^*_p,\cb^*_p$ enlarge the support of the operand to the right, 
 $\overline{\bb}^*_p,\overline{\cb}^*_p$ do the same but to the left. 
Using these operators, 
we construct in Section \ref{sec3} a set of elements 
supported on a given interval, 
and show that they span the space of quasi-local operators. 

Throughout the text, 
$q$ is a non-zero complex number.  
Except in Lemma \ref{independence}
we assume that $q$ is not a root of unity. 
We work over the base field $\C(q^\al)$, 
$q^\al$ being an indeterminate.  

\section{Preliminaries}\label{sec2}

In this section we recall the main features of
the creation and annihilation operators 
\eqref{tbcbc}.  
Their construction is rather involved  
and occupies a large part of \cite{HGSII}. 
Leaving the details to that paper, we give below 
a summary of their properties.

\subsection{Commutation relations}

In the algebra generated by \eqref{tbcbc}, 
$\tb^*_p$ are central: 
\bea
[\tb^*_p,\xb_{p'}]=0\quad 
(p,p'\ge1,\  \xb=\tb^*, \bb,\cb, \bb^*,\cb^*)\,.
\label{Com1}
\ena

The rest of the operators obey the canonical anti-commutation 
relations 
\bea
&&[\bb_p,\bb_{p'}]_+=[\bb_p,\cb_{p'}]_+=[\cb_p,\cb_{p'}]_+=0,
\label{Com2}
\\
&&[\bb^*_p,\bb_{p'}]_+=
[\cb^*_p,\cb_{p'}]_+=\delta_{p,p'},\quad 
[\bb^*_p,\cb_{p'}]_+=[\cb^*_p,\bb_{p'}]_+=0\,,
\label{Com3}
\\
&&
[\bb^*_p,\bb^*_{p'}]_+
=[\bb^*_p,\cb^*_{p'}]_+=[\cb^*_p,\cb^*_{p'}]_+=0\,.
\label{Com4}
\ena
The relations \eqref{Com1}--\eqref{Com3} are proved in
\cite{HGSII}. 
At this writing, proof of the 
last set of relations \eqref{Com4} remains 
unavailable. 
In what follows we shall assume \eqref{Com4}.  

\subsection{Support property}

For an element $X\in \cW^{(\al)}_s$, 
its support $\supp X$ is the minimal interval $[k,m]$
outside which $X$ coincides with $q^{2(\al-s) S(k-1)}$.  
We denote by $(\cW^{(\al)})_{[k,m]}$ 
the subspace of $\cW^{(\al)}$ 
consisting of elements $X$ such that
$\supp X\subset [k,m]$. 
Sometimes we abuse the language and say that 
a local operator $\mathcal{O}$ is supported on 
$[k,m]$ if it acts as identity outside $[k,m]$.  
The operators \eqref{tbcbc} 
respect the support of the operand in the following sense. 

The operators $\bb_p,\cb_p$ preserve the support.
More specifically, we have 
for any $X\in (\cW^{(\al)})_{[k,m]}$ 
\bea
&&\supp \xb_p(X)\subset [k,m]
\quad (\xb=\bb,\cb),
\label{supp1}\\
&&\xb_p(X)=0\quad \text{if $p>m-k+1$} 
\quad (\xb=\bb,\cb).
\label{kill}
\ena
The second property justifies to call them annihilation operators. 

In contrast, the creation 
operators $\tb^*_p, \bb^*_p,\cb^*_p$ enlarge the support
according to the rule
\bea
&&\supp \xb^*_p(X)\subset [k,m+p]
\quad (\xb^*=\tb^*, \bb^*,\cb^*).
\label{supp2}
\ena
Notice that the support is enlarged {\it only to the right}.  

\subsection{Independence of monomials}

Among the operators $\tb^*_p$, 
$\boldsymbol{\tau}=\tb^*_1/2$ plays a special role: 
it is the translation to the right by one lattice unit. 
In particular \eqref{Com1} tells 
that all operators are translationally invariant. 

Separating $\tb^*_1$ from the rest, we set 
\be
\hb^*_p=(\tb^*_1)^{-1}\tb^*_{p+1},\quad
\hb^*(\z)=\sum_{p=0}^\infty (\z^2-1)^{p}\hb^*_p\,.
\en
Accodring to \cite{HGSII}, Subsection 3.4, 
$\hb^*(\z)$ is given by the following formula. 
For $X\in (\cW^{(\al)})_{[k,m]}$, we have  
\bea
&&
\hb^*(\z)(X)=\lim_{l\to\infty}
\Rcb_{l,l-1}(\z^2)\cdots \cdots\Rcb_{k,k-1}(\z^2)(X).
\label{hbz}
\ena
Here $\Rcb_{i+1,i}(\z^2)\in \End(V_i\otimes V_{i+1})$
denotes the adjoint action by the standard $R$ matrix 
for the XXZ model. 
We have $\Rcb_{i+1,i}(1)=1$. Moreover 
if $\supp(X)\subset[a,b]$ with either $i>b$ or $i<a-1$, then 
$\Rcb_{i+1,i}(\z^2)(X)=X$ holds. 
These properties of $\Rcb_{i+1,i}(\z^2)$ ensure that 
each coefficient of the series 
appearing in the right hand side of \eqref{hbz} 
stabilizes for large enough $l$. 
More precisely, we have for any $p$ 
\be
\hb^*(\z)(X)\equiv 
\Rcb_{m+p,m+p-1}(\z^2)\cdots \cdots\Rcb_{k,k-1}(\z^2)(X)
\quad \bmod (\z^2-1)^{p+1}. 
\en
Consequently we have the support property
\begin{equation}
\supp \hb^*_p(X)\subset[k-1,m+p] \,.
\label{hsupp}
\end{equation}

\begin{lem}\label{independence}
If $q$ is generic, then the set of elements
\bea
(\hb^*_1)^{m_1}(\hb^*_2)^{m_2}\cdots(q^{2\al S(0)}) 
\quad (m_1,m_2,\cdots\ge 0)
\label{tmono}
\ena
is linearly independent. 
\end{lem}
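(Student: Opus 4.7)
The plan is to combine the explicit formula \eqref{hbz} for $\hb^*(\z)$ with a support-grading argument and a multi-spectral symmetric-function argument. First I would apply \eqref{hbz} to $X=q^{2\al S(0)}$. Since the vacuum acts as $q^{\al\sigma^3_j}$ on sites $j\le 0$ and as the identity on $j\ge 1$, every factor $\Rcb_{i+1,i}(\z^2)$ in the limit is trivial except near the boundary at $i=0$. Expanding $\Rcb_{i+1,i}(\z^2)$ in powers of $(\z^2-1)$ and collecting order by order, each coefficient $\hb^*_p(q^{2\al S(0)})$ becomes an explicit quasi-local operator supported in $[-1,p]$, whose ``top'' component acts as a spin-flip at site $p$ with a coefficient that is nonzero for generic $q$.

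Next I would set up a support grading. By iterating \eqref{hsupp}, the monomial $(\hb^*_1)^{m_1}(\hb^*_2)^{m_2}\cdots(q^{2\al S(0)})$ has support contained in $[-1,d]$ where $d=\sum_p p\,m_p$, so monomials with different $d$ are distinguished by their maximal support position. To separate monomials with the same $d$, I would introduce the multi-variable generating function
\[
\Phi_n(\z_1,\ldots,\z_n) := \hb^*(\z_1)\cdots\hb^*(\z_n)(q^{2\al S(0)}),
\]
which is symmetric in the $\z_i$'s because the $\hb^*_p$ mutually commute (since all $\tb^*_p$ are central). Expanding each factor as a power series in $(\z_i^2-1)$ and collecting, the monomial $(\hb^*_1)^{m_1}(\hb^*_2)^{m_2}\cdots(q^{2\al S(0)})$ appears, up to a combinatorial factor, as the coefficient of the monomial symmetric function indexed by the partition $(1^{m_1}2^{m_2}\cdots)$ with $\sum_p m_p=n$.

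The remaining key step is to project $\Phi_n$ onto a component of maximal support --- for instance, the component acting by a fixed spin-flip pattern at sites $-1$ through $d$ --- and show that this projection is a non-degenerate symmetric polynomial in the $\z_i^2$. I expect the iterated $\Rcb_{i+1,i}(\z^2)$-action at the right boundary to produce, inductively, a Vandermonde-type product in the $\z_i^2$ (or more generally a symmetric polynomial whose monomial symmetric-function coefficients are all nonzero). Together with the support grading this yields linear independence of the elements \eqref{tmono}. The main obstacle is this final step: carrying out the explicit identification of the maximal-support projection through the nested $R$-matrix expansions, and verifying that genericity of $q$ suffices to rule out accidental cancellations among contributions of different orders.
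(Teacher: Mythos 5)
Your proposal is a plan rather than a proof: the step you yourself flag as ``the main obstacle'' --- identifying the maximal-support projection of $\hb^*(\z_1)\cdots\hb^*(\z_n)(q^{2\al S(0)})$ through the nested $R$-matrix expansion and showing it is non-degenerate --- is exactly where all the content lies, and nothing in the proposal makes that step tractable at generic $q$. Several preparatory assertions are also off. The factors in \eqref{hbz} are \emph{not} trivial away from the boundary: once $\Rcb_{k,k-1}(\z^2)$ has acted the support has grown, so every subsequent factor acts nontrivially, and only modulo $(\z^2-1)^{p+1}$ does the product truncate. The claimed ``spin-flip at site $p$'' top component cannot be right, since $\hb^*_p$ preserves the spin $s$. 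The filtration by $d=\sum_p p\,m_p$ only gives $\supp\subset[\,\cdot\,,d]$, so different $d$ do not separate monomials unless the top component is shown nonzero --- again the unproven step. Finally, for fixed $d$ there are monomials with different numbers $n$ of factors (e.g.\ $(\hb^*_1)^2$ versus $\hb^*_2$), and your generating function $\Phi_n$ only compares monomials with the same $n$, so even granting the Vandermonde-type claim the argument as structured does not separate all the elements \eqref{tmono}.

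The paper sidesteps all of this with one idea you are missing: since the assertion concerns generic $q$, it suffices to verify it at a \emph{single} convenient value, and the authors take the free-fermion point $q=\sqrt{-1}$. There the adjoint $R$-matrix is the exponential of an expression quadratic in fermions $\Psi^\pm_j,\Phi^\pm_j$ on $\cW^{(\al)}$, and $\hb^*(\z)$ is obtained in closed form by bosonization: up to the scalar $1-z^2$ it equals $\exp\bigl(\sum_{\nu\ge1}\frac{z^\nu}{\nu}(\mathcal{I}_{-\nu}-\mathcal{I}_{\nu})\bigr)$, where the $\mathcal{I}_\nu$ form a Heisenberg algebra with $\mathcal{I}_\nu$ annihilating the vacuum for $\nu>0$. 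Applied to $q^{2\al S(0)}$, the coefficients of $\log\hb^*(\z)$ are therefore free commuting creation operators, and the independence of the monomials \eqref{tmono} is immediate. If you wish to salvage your route you must actually carry out your final step; specializing $q$ first is precisely what makes such an explicit computation feasible.
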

\begin{proof}
It is enough to show the assertion for a special value of 
the parameter $q$. 
We choose $q=\sqrt{-1}$, 
writing $q^\al$ as $y$ and $q^{2\al S(0)}$ as $y^{2S(0)}$. 

Following \cite{HGS}, let us introduce linear operators 
\begin{equation*}
\Psi^\pm_j,\ \Phi^\pm_j~:~\Wal_s\longrightarrow \Wal_{s\pm1} 
\end{equation*}
by the formula
\begin{align}
\Psi^\pm_j(X)&=\psi^\pm_j X-(-1)^sX\psi^\pm_j,
\\
\Phi^\pm_j(X)&=\frac{1}{1-y^{\mp 2}}
\left(\psi^\pm_j X-y^{\mp 2}(-1)^sX\psi^\pm_j\right),
\end{align}
where $X\in \Wal_s$ and  
\begin{equation*}
\psi^\pm_j=\sigma^\pm_j e^{\mp \pi i S(j-1)}\,.
\end{equation*}
They satisfy the canonical anti-commutation relations
\begin{equation*}
[\Psi^\e_j,\Psi^{\e'}_k]_+=0,\quad
[\Phi^\e_j,\Phi^{\e'}_k]_+=0,\quad
[\Psi^\e_j,\Phi^{\e'}_k]_+=\delta_{j,k}\delta_{0,\e+\e'},
\end{equation*}
as well as the annihilation property
\begin{align*}
&\Psi^\pm_j(y^{2 S(0)})=0\quad (j>0),
\\
&\Phi^\pm_j(y^{2 S(0)})=0\quad (j\le 0)\,.
\end{align*}
Acting with the fermions $\Psi^\pm_j$, $\Phi^\pm_j$ 
over the vacuum $y^{2 S(0)}$ 
we obtain a Fock space $\cF\subset \Wal$. 

When $q=\sqrt{-1}$, the $R$ matrix simplifies to 
\begin{align*}
&\Rcb_{i+1,i}(\z^2)(X)=e^{H_{i,i+1}} X e^{-H_{i,i+1}},
\\
&H_{i,i+1}=z 
(\psi^+_{i}+\psi^+_{i+1})(\psi^-_{i}-\psi^-_{i+1})\,,
\end{align*}
where
\begin{equation*}
z=\frac{1-\z^2}{1+\z^2}\,.
\end{equation*}
Expressing the left and right multiplication operators by
$\psi^\pm_j$ in terms of $\Psi^\pm_j$ and $\Phi^\pm_j$, 
we obtain an identity of linear operators on $\cF$
\begin{equation*}
\Rcb_{i+1,i}(\z^2)=
\exp\left(z\sum_{\e=\pm}
(\e\ \Phi^\e_i+\Phi^\e_{i+1})
(\Psi^{-\e}_i-\e \Psi^{-\e}_{i+1})\right)\,.
\end{equation*}
A simple calculation using the definition \eqref{hbz}
shows that as formal power series in $z$ we have 
\begin{align}
\hb^*(\z)\ \Phi^\pm_{p}\ \hb^*(\z)^{-1}
&=-\Phi^\pm_{p-1}\ z+(1-z^2)\sum_{k\ge p}\Phi^\pm_k\ z^{k-p}\,,
\label{hPhi}
\\
\hb^*(\z)\ \Psi^\pm_{p}\ \hb^*(\z)^{-1}
&=-\Psi^\pm_{p-1}\ z+(1-z^2)\sum_{k\ge p}\Psi^\pm_k\ z^{k-p}\,.
\label{hPsi}
\end{align}
The Fock vacuum expectation value of $\hb^*(\z)$ 
is easily evaluated, with the result
\begin{equation}
\langle \hb^*(\z)\rangle=1-z^2\,.
\label{expec}
\end{equation}
The operator $\hb^*(\z)$ is characterized by 
\eqref{hPhi}, \eqref{hPsi} and \eqref{expec}. 
We thus find the expression on $\cF$ 
\begin{align}
\hb^*(\z)&=(1-z^2)
\exp\left(\sum_{\nu=1}^\infty
\frac{z^\nu}{\nu}(\mathcal{I}_{-\nu}-\mathcal{I}_\nu)\right),
\label{formulah} 
\end{align}
where we used the standard bosonization formula
\begin{equation*}
\mathcal{I}_\nu=\sum_{\e=\pm}\sum_{p\in\Z}
:\Phi^\e_p\Psi^{-\e}_{p+\nu}:\,
\end{equation*}
which satisfies
\begin{align*}
&[\mathcal{I}_\mu,\mathcal{I}_\nu]=2\mu\delta_{\mu+\nu,0}\,,
\\
&\mathcal{I}_\nu(y^{2S(0)})=0\quad (\nu>0).
\end{align*}
Since the coefficients of the expansion of 
$\log\hb^*(\z)$ belong to the creation part of the 
Heisenberg algebra,  
the statement of Proposition is obvious.
\end{proof}

\begin{cor}\label{independence2}
The elements \eqref{monomial} are linearly independent. 
\end{cor}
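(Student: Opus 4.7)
The plan is to peel off the fermionic creation operators $\bb^*_p,\cb^*_p$ by using their duals $\bb_p,\cb_p$, and thereby reduce the statement to linear independence of monomials in the commutative subalgebra generated by the $\tb^*_p$'s, which is handled by Lemma \ref{independence}.

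Suppose that a linear dependence
$$\sum_{I, J, K, p} c_{I, J, K, p}(\tb^*_1)^p \tb^*_I \bb^*_J \cb^*_K(q^{2\al S(0)}) = 0$$
holds. Since the $\tb^*$-operators are central by \eqref{Com1}, group the sum by the pair $(J, K)$ and write this as
$$\sum_{J, K} T_{J, K}\,\bb^*_J \cb^*_K(q^{2\al S(0)}) = 0,\qquad T_{J, K} := \sum_{I, p}c_{I, J, K, p}(\tb^*_1)^p \tb^*_I.$$
Argue by descending induction on the quantity $N = \max\{|J|+|K|: T_{J, K}\neq 0\}$. For each $(J_0, K_0)$ of total length $N$, with $J_0 = (j^0_1 > \cdots > j^0_{s_0})$ and $K_0 = (k^0_1 > \cdots > k^0_{t_0})$, apply the composite annihilation operator
$$A_{J_0, K_0} := \cb_{k^0_{t_0}}\cdots\cb_{k^0_1}\bb_{j^0_{s_0}}\cdots\bb_{j^0_1}$$
to both sides. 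Using the canonical anti-commutation relations \eqref{Com2}--\eqref{Com4} together with $\bb_p(q^{2\al S(0)})=\cb_p(q^{2\al S(0)})=0$, a Wick-type computation, in which each annihilation operator is paired with a creation operator of the matching species, yields
$$A_{J_0, K_0}\bb^*_J\cb^*_K(q^{2\al S(0)})=
\begin{cases}
\pm(q^{2\al S(0)}) & \text{if } (J, K)=(J_0, K_0),\\
0 & \text{for any other } (J, K) \text{ with } |J|+|K|\leq N.
\end{cases}$$
Applying $A_{J_0, K_0}$ to the relation therefore produces $\pm T_{J_0, K_0}(q^{2\al S(0)})=0$.

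It then remains to deduce $T_{J_0, K_0}=0$ from $T_{J_0, K_0}(q^{2\al S(0)})=0$, i.e.\ to show: if $T=\sum_{I, p}c_{I, p}(\tb^*_1)^p\tb^*_I$ satisfies $T(q^{2\al S(0)})=0$, then all $c_{I, p}=0$. Using $\tb^*_{p+1}=\tb^*_1\hb^*_p$, each monomial rewrites uniquely as $(\tb^*_1)^{p+r}\prod_k(\hb^*_k)^{m_k}$, where $r=|I|$ and $m_k$ is the multiplicity of $k+1$ in $I$; the map $(p, I)\leftrightarrow(p+r, (m_k))$ is bijective. Since $\tb^*_1=2\boldsymbol{\tau}$ is the translation by one lattice unit and commutes with every $\hb^*_k$, grouping by the exponent $n=p+r$ converts the claim into the linear independence of $\{\boldsymbol{\tau}^n\prod_k(\hb^*_k)^{m_k}(q^{2\al S(0)})\}_{n\in\Z,\,(m_k)}$. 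For each fixed $n$, this is Lemma \ref{independence} applied, via translation covariance, to the shifted vacuum $q^{2\al S(n)}$; the absence of cross-cancellation between different values of $n$ follows from the support bound \eqref{hsupp}, which pins down the left endpoint of each translated monomial in terms of $n$.

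The principal technical point is the extension across powers of $\tb^*_1$ at the end: Lemma \ref{independence} is stated only for the single vacuum, so it must be combined with a translation/support argument to accommodate the factor $(\tb^*_1)^n$. Everything else is standard Fock-space bookkeeping based on the canonical anti-commutation relations \eqref{Com2}--\eqref{Com4} and the centrality \eqref{Com1} of the $\tb^*$-operators.
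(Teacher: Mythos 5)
Your overall strategy --- stripping the fermionic creation operators with $\bb_p,\cb_p$ via \eqref{Com3} and then reducing to linear independence of the commutative $\tb^*$-monomials handled by Lemma \ref{independence} --- is exactly what the paper intends by its one-line proof (``a direct consequence of Lemma \ref{independence} and the commutation relations \eqref{Com1} and \eqref{Com3}''), and your Wick-type computation with descending induction on $|J|+|K|$ is correct. (Incidentally, \eqref{Com4} is not needed for this part, since the monomials \eqref{monomial} are already normal-ordered; the paper's citation of only \eqref{Com1} and \eqref{Com3} reflects this.)

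The gap is in your final step, the separation of the different powers of $\tb^*_1$. You assert that \eqref{hsupp} ``pins down the left endpoint of each translated monomial in terms of $n$.'' It does not: \eqref{hsupp} is only an inclusion $\supp \hb^*_p(X)\subset[k-1,m+p]$, and each application of an $\hb^*_k$ may push the support one unit to the \emph{left}. The best one gets for $\boldsymbol{\tau}^n\prod_k(\hb^*_k)^{m_k}(q^{2\al S(0)})$ is that its support lies in $[\,1+n-\sum_k m_k,\ n+\sum_k k\,m_k\,]$, whose left bound depends on the multiset $(m_k)$ and not only on $n$; for instance $\boldsymbol{\tau}\,\hb^*_1(q^{2\al S(0)})$ and $\boldsymbol{\tau}^2(\hb^*_1)^2(q^{2\al S(0)})$ both carry the left bound $1$. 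Moreover, a support \emph{inclusion} cannot by itself exclude cancellation between terms even when the bounds do differ. So the absence of cross-cancellation between different values of $n$ remains unproved. To close this you would need to extend the Fock-space computation in the proof of Lemma \ref{independence} so as to incorporate the translation operator --- e.g.\ compute the action of $\boldsymbol{\tau}$ on $\cF$ at $q=\sqrt{-1}$, where $\boldsymbol{\tau}^n(y^{2S(0)})=y^{2S(n)}$ is a nontrivial vector of $\cF$, and verify that the vectors $\boldsymbol{\tau}^n\prod_k(\hb^*_k)^{m_k}(y^{2S(0)})$ stay independent. In fairness, the paper's own proof does not spell out this point either; you have correctly isolated the one genuinely nontrivial issue, but the argument you offer for it does not work as stated.
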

\begin{proof}
This is a direct consequence of Lemma \ref{independence} and 
the commutation relations \eqref{Com1} and \eqref{Com3}.
\end{proof}


\subsection{Another family of operators}

In addition to $\bb_p,\cb_p$,  
we have considered in \cite{HGSII} 
another set of annihilation operators $\overline{\bb}_p,
\overline{\cb}_p$.  
They have the same support property
\be
&&\supp \overline{\xb}_p(X)\subset [k,m]
\quad (\overline{\xb}=\overline{\bb},\overline{\cb}),
\\
&&\overline{\xb}_p(X)=0\quad \text{if $p>m-k+1$} 
\quad (\overline{\xb}=\overline{\bb},\overline{\cb}), 
\en
anti-commute with $\bb_{p'},\cb_{p'}$ and satisfy 
\bea
&&[\bb^*_p,\overline{\bb}_{p'}]_+=
[\cb^*_p,\overline{\cb}_{p'}]_+=-\tb^*_{p-p'+1}\,.
\label{Com6}
\ena
In fact, they can be expressed
in terms of the operators \eqref{tbcbc}
as follows (see \cite{HGSIII}, Corollary A.2):
\bea
&&(\tb^*_1)^{-1}\overline{\bb}_p
=-\sum_{p'\ge0}\hb^*_{p'}\bb_{p+p'},
\quad
(\tb^*_1)^{-1}\overline{\cb}_p
=-\sum_{p'\ge 0}\hb^*_{p'}\cb_{p+p'}.
\label{bar1}
\ena
Due to \eqref{kill}, 
the sum in the right hand side is finite on each operand $X\in \cW^{(\al)}$.  
Even though these operators are not independent, 
we find it useful to take them into consideration. 

Let us introduce the creation counterpart to these operators.
Define $\overline{\bb}^*_p,\overline{\cb}^*_p$ ($p\ge 1$) inductively by 
\bea
\bb^*_p=\sum_{p'=1}^p\hb^*_{p-p'}\overline{\bb}^*_{p'},
\quad
\cb^*_p=\sum_{p'=1}^p\hb^*_{p-p'}\overline{\cb}^*_{p'}\,.
\label{bar2}
\ena
We have then 
\bea
&&[\overline{\bb}^*_p, (\tb^*_1)^{-1}\overline{\bb}_{p'}]_+=
[\overline{\cb}^*_p, (\tb^*_1)^{-1}
\overline{\cb}_{p'}]_+=-\delta_{p,p'}\,.
\label{Com5}
\ena
In the sequel we consider the generating series
\be
&&\xb(\z)=\sum_{p\ge 1}(\z^2-1)^{-p}\xb_p
\quad (\xb=\bb,\cb,\overline{\bb},\overline{\cb}), 
\\
&&\xb^*(\z)=\sum_{p\ge 1}(\z^2-1)^{p-1}\xb^*_p
\quad (\xb^*=\bb^*,\cb^*,\overline{\bb}^*,\overline{\cb}^*). 
\en
Then we have the relations of formal series
\be
&&\overline{\xb}(\z)\equiv 
-\tb_1^*\hb^*(\z)\xb(\z)\quad\bmod (\z^2-1)^0,
\\
&&\overline{\xb}^*(\z)=\hb^*(\z)^{-1}\xb^*(\z)\,.
\en

Unlike $\bb^*_p,\cb^*_p$, they enlarge the support essentially 
{\it to the left}. 

\begin{prop}
For any $X\in (\cW^{(\al)})_{[k,m]}$ we have 
\bea
&&\supp \overline{\xb}^*_p(X)\subset [k-p+1,m+1]
\quad (\overline{\xb}^*=\overline{\bb}^*,\overline{\cb}^*).
\label{supp3}
\ena
\end{prop}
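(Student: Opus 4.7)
The plan is to induct on $p$, treating $\overline{\bb}^*_p$ only; the case of $\overline{\cb}^*_p$ is identical by the parallel structure of \eqref{bar2}. For the base case $p=1$, the identity $\Rcb_{i+1,i}(1)=\id$ combined with \eqref{hbz} gives $\hb^*(\z)=\id+O(\z^2-1)$, so $\hb^*_0=\id$; consequently \eqref{bar2} forces $\overline{\bb}^*_1=\bb^*_1$, and the claim $\supp\overline{\bb}^*_1(X)\subset[k,m+1]$ is immediate from \eqref{supp2}.

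For the inductive step, I would assume the claim for all $p'<p$ and rewrite \eqref{bar2} as
\begin{equation*}
\overline{\bb}^*_p(X)=\bb^*_p(X)-\sum_{p'=1}^{p-1}\hb^*_{p-p'}\,\overline{\bb}^*_{p'}(X),
\end{equation*}
applied to $X\in(\cW^{(\al)})_{[k,m]}$. The left-support bound $k-p+1$ is then easy: the induction hypothesis $\supp\overline{\bb}^*_{p'}(X)\subset[k-p'+1,m+1]$ together with the left-extension-by-one property of $\hb^*_{p-p'}$ from \eqref{hsupp} shows each summand $\hb^*_{p-p'}\overline{\bb}^*_{p'}(X)$ has left support at least $k-p'\ge k-p+1$, with equality attained at $p'=p-1$, while $\bb^*_p(X)$ has left support $k\ge k-p+1$.

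The right-support bound $[\,\cdot\,,m+1]$ is the real content of the proposition, since the individual summands on the right-hand side extend rightward as far as $m+p$, so the cancellation of the ``excess'' contributions at positions $m+2,\ldots,m+p$ must be exhibited. To prove this I would pass to the generating-series identity $\bb^*(\z)(X)=\hb^*(\z)\,\overline{\bb}^*(\z)(X)$ (the collected form of \eqref{bar2}) and compare the two sides position by position for $j\in\{m+2,\ldots,m+p\}$. On the right, the only contributions at a position $j>m+1$ come from applying the outermost $R$-matrices $\Rcb_{j,j-1}(\z^2),\ldots$ in the formula \eqref{hbz} to the lower-order coefficients $\overline{\bb}^*_{p'}(X)$ ($p'<p$), which by the inductive hypothesis are supported within $[\,\cdot\,,m+1]$; on the left, the corresponding tail of $\bb^*_p(X)$ on $[m+2,m+p]$ is governed by its $R$-matrix trace construction from \cite{HGSII}. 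Matching these tails term by term forces the coefficient of $\overline{\bb}^*_p(X)$ to vanish at every such $j$, establishing the inclusion $\supp\overline{\bb}^*_p(X)\subset[k-p+1,m+1]$.

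The main obstacle is precisely this right-tail matching, which demands a detailed $R$-matrix computation combining the explicit form \eqref{hbz} of $\hb^*(\z)$ with the construction of $\bb^*(\z)$ from \cite{HGSII}. A cleaner alternative, which I would actively look for, would be to exhibit an independent definition of $\overline{\bb}^*(\z)$ as a limit of products of $R$-matrices extending to the left (dual to \eqref{hbz}, with the boundary pushed to $-\infty$ instead of $+\infty$); in that case \eqref{supp3} would be manifest and the identity \eqref{bar2} would acquire the status of a compatibility relation between the two constructions, eliminating the need for the tail-by-tail cancellation argument.
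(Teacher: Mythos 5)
Your reduction to the right-hand support bound is accurate: the base case ($\hb^*_0=\id$, so $\overline{\bb}^*_1=\bb^*_1$) and the left bound $k-p+1$ in the inductive step follow exactly as you say from \eqref{bar2}, \eqref{supp2} and \eqref{hsupp}. But the bound $\supp\overline{\bb}^*_p(X)\subset[\,\cdot\,,m+1]$ — which you correctly identify as the real content — is not actually proved in your proposal. The ``tail matching'' step asserts that comparing the two sides of $\bb^*(\z)(X)=\hb^*(\z)\overline{\bb}^*(\z)(X)$ at positions $j\in\{m+2,\dots,m+p\}$ ``forces'' the excess of $\overline{\bb}^*_p(X)$ to vanish, but this is precisely the conclusion restated: without an independent structural fact about how $\bb^*(\z)(X)$ behaves beyond column $m+1$, there is nothing preventing $\bb^*_p(X)$ from having a genuine tail on $[m+2,m+p]$ that is \emph{not} reproduced by $\hb^*(\z)$ acting on something supported in $[\,\cdot\,,m+1]$, in which case $\overline{\bb}^*_p(X)$ would inherit it. You defer exactly this point to ``a detailed $R$-matrix computation,'' so the argument does not close.

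The missing ingredient is the one the paper imports from \cite{HGSII} (Lemma 3.7): $\bb^*(\z)(X)=\lim_{l\to\infty}\Rcb_{l,l-1}(\z^2)\cdots\Rcb_{m+2,m+1}(\z^2)\bigl(Y(\z)\bigr)$ for some $Y(\z)\in(\cW^{(\al)})_{[k,m+1]}$ — i.e.\ the entire rightward tail of $\bb^*(\z)(X)$ is manufactured by the same string of $R$-matrices that builds $\hb^*(\z)$. Combining this with the leftward product formula for $\hb^*(\z)^{-1}$, the rightward factors cancel and one obtains $\overline{\bb}^*(\z)(X)$ as a leftward-extending product of inverse $R$-matrices applied to $Y(\z)$; the bound \eqref{supp3} then follows by the same stabilization argument as for \eqref{hsupp}, since the coefficient of $(\z^2-1)^{p-1}$ involves at most $p-1$ factors to the left of $k$. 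This is exactly the ``cleaner alternative'' you say you would look for — it does exist, but it is derived rather than postulated, and the derivation rests on the quoted structure of $\bb^*(\z)$, which is also the fact your tail-matching route would need. Without citing or proving that input, the proposition remains open in your write-up.
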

\begin{proof}
It is known (see \cite{HGSII}, Lemma 3.7) 
that $\bb^*(\z)(X)$ has an expression 
\be
&&\bb^*(\z)(X)=
\lim_{l\to\infty}
\Rcb_{l,l-1}(\z^2)\cdots \cdots\Rcb_{m+2,m+1}(\z^2)
\Bigl(Y(\z)\Bigr)  
\en
for some $Y(\z)\in(\cW^{(\al)})_{[k,m+1]}$. 
Using the formula
\be
&&\hb^*(\z)^{-1}(X')=
\lim_{l\to\infty}
\Rcb_{-l+1,-l}(\z^2)^{-1}\cdots \cdots
\Rcb_{m'+1,m'}(\z^2)^{-1}(X')  
\en
valid for any $X'\in(\cW^{(\al)})_{[k',m']}$, 
we obtain that 
\be
&&\overline{\bb}^*(\z)(X)=
\lim_{l\to\infty}
\Rcb_{-l+1,-l}(\z^2)^{-1}\cdots \cdots
\Rcb_{k,k-1}(\z^2)^{-1}
\cdots \cdots\Rcb_{m+1,m}(\z^2)^{-1}
\Bigl(Y(\z)\Bigr)\,.
\en
By the same argument as for \eqref{hsupp},
this expression implies that the support property 
$\overline{\bb}^*_p(X)\subset [k-p+1,m+1]$ holds. 

The case of $\overline{\cb}^*_p$ is entirely similar.
\end{proof}

\medskip

\noindent{\it Remark.} 
To simplify the notation, in this paper we have modified the 
definition of the generating functions given in \cite{HGSII}. 
Denoting those in \cite{HGSII} by $\bb^*_{II}(\z)$, etc., 
the present definition is related to them as follows.
\begin{align*}
&\bb_{II}(\z)=\z^{-\al-\bS}\left(\bb_0+\bb(\z)\right),
\\
&\cb_{II}(\z)=\z^{\al+\bS}\left(\cb_0+\cb(\z)\right),
\\
&\bb^*_{II}(\z)=\z^{\al+\bS+2}\bb^*(\z),
\\
&\cb^*_{II}(\z)=\z^{-\al-\bS-2}\cb^*(\z).
\end{align*}
\qed

\section{A basis of $(\cW^{(\al)})_{[1,n]}$}\label{sec3}

The goal of this section is to  
construct a basis of $(\cW^{(\al)})_{[k,m]}$ for all intervals $[k,m]$ 
using linear combinations of \eqref{monomial}. 
In view of the translational invariance
we concentrate on the case $[1,n]$. 

In order to verify the spanning property, one has to 
find sufficiently many operators supported on $[1,n]$. 
In general the monomials \eqref{monomial}  have too large support, 
and suitable linear combinations of them must be chosen. 
We do this in two steps. First 
we introduce certain elements $B_J$ and show that their support 
is contained in $[1,n]$. 
We then construct general basis elements
by applying to them 
annihilation operators which do not enlarge the support. 

\subsection{Elements $B_J$}

Let $l$ be a non-negative integer satisfying $n\ge l\ge 0$. 
We define the numbers 
\bea
C^{i_1,\cdots,i_l}_{j_1,\cdots,j_l;k_1,\cdots,k_l}
\quad (1\le i_p,j_p,k_p)
\label{CIJK}
\ena 
by the generating series
\bea
&&\frac{\Delta(x)\Delta(y)\Delta(z)}
{\prod_{i,j=1}^l(1-x_iy_j)(1-x_iz_j)}
=
\sum C^{i_1,\cdots,i_l}_{j_1,\cdots,j_l;k_1,\cdots,k_l} 
\prod_{p=1}^l (x_p^{i_p-1}y_p^{j_p-1}z_p^{k_p-1}).
\label{Cauchy}
\ena
Here 
$x=(x_1,\cdots,x_l)$, $y=(y_1,\cdots,y_l)$, $z=(z_1,\cdots,z_l)$, 
$\Delta(x)=\prod_{1\le i<j\le l}(x_i-x_j)$, 
and the sum is taken over all positive integers 
$i_p,j_p,k_p$ ($p=1,\cdots,l$).  

When all sequences are decreasing ($i_1>\cdots>i_l$, etc.), 
we identify them with subsets 
$I=\{i_1,\cdots,i_l\} \subset [1,n]$, etc., 
and write $C^I_{J,K}$ for \eqref{CIJK}. 
We shall write $|I|$ for the cardinality of $I$. 
In this case 
$C^I_{J,K}$ coincides with the Littlewood-Richardson coefficient
$c^{\lambda}_{\mu,\nu}$ well-known in combinatorics 
(see \cite{Mac}, eq.(5.2)); in particular they are non-negative integers. 
The precise correspondence reads
\be
C^{I}_{J,K}=c^{\la(I)}_{\la(J),\la(K)}, 
\en
where $\la(I)=(\la_1,\cdots,\la_l)=(i_1-l,\cdots,i_l-1)$.

In what follows, for a subset $I=\{i_1,\cdots,i_l\}$ ($i_1>\cdots>i_l$)
of $[1,n]$, we write 
\be
&&\xb_M=\xb_{i_1}\cdots\xb_{i_l}
\en
for $\xb=\bb,\cdots,\overline{\cb}^*$.

Now we introduce a family of operators $B_J$ indexed by $J\subset [1,n]$.
Writing $l=|J|$ we define
\bea
&&B_J=\sum_{I,K}C^I_{J,K}
\bb^*_{n}\cdots \overset{i_1}{\cb^*_{k_1}}\cdots 
\overset{i_l}{\cb^*_{k_l}}\cdots \bb^*_1(q^{2\al S(0)})\,.
\label{BJ}
\ena
In the last line, the sum is taken over all subsets
$I,K\subset [1,n]$ with $I=\{i_1,\cdots,i_l\}$, 
$K=\{k_1,\cdots,k_l\}$, and 
$\cb^*_{k_p}$ is placed at the $i_p$-th slot. 

Since $c^{\la}_{\mu,\emptyset}=\delta_{\la,\mu}$, we have
\be
B_J=\pm \bb^*_{\complement J}\cb^*_{\{1,\cdots,l\}}(q^{2\al S(0)})+\cdots
\en
where $\cdots$ stands for terms which do not contain $\cb^*_{\{1,\cdots,l\}}$. 
\medskip

\noindent {\it Example.}
Let $n=4$ and $l=2$. Suppressing $q^{2\al S(0)}$ we have 
\be
&&B_{\{3,4\}}=\cb^*_2\cb^*_1\bb^*_2\bb^*_1\,,
\\
&&B_{\{2,4\}}=\cb^*_2\bb^*_3\cb^*_1\bb^*_1
+\cb^*_3\cb^*_1\bb^*_2\bb^*_1
\,,
\\
&&B_{\{1,4\}}=\cb^*_2\bb^*_3\bb^*_2\cb^*_1
+\cb^*_3\bb^*_3\cb^*_1\bb^*_1
+\cb^*_4\cb^*_1\bb^*_2\bb^*_1\,,
\\
&&B_{\{2,3\}}=\bb^*_4\cb^*_2\cb^*_1\bb^*_1
+\cb^*_3\bb^*_3\cb^*_1\bb^*_1
+\cb^*_3\cb^*_2\bb^*_2\bb^*_1
\,,
\\
&&B_{\{1,3\}}=\bb^*_4\cb^*_2\bb^*_2\cb^*_1+
\cb^*_3\bb^*_3\bb^*_2\cb^*_1+
\bb^*_4\cb^*_3\cb^*_1\bb^*_1+
\cb^*_3\bb^*_3\cb^*_2\bb^*_1+
\cb^*_4\bb^*_3\cb^*_1\bb^*_1+
\cb^*_4\cb^*_2\bb^*_2\bb^*_1
\,,
\\
&&B_{\{1,2\}}=
\bb^*_4\bb^*_3\cb^*_2\cb^*_1
+\bb^*_4\cb^*_3\bb^*_2\cb^*_1
+\cb^*_4\bb^*_3\bb^*_2\cb^*_1
+\bb^*_4\cb^*_3\cb^*_2\bb^*_1
+\cb^*_4\bb^*_3\cb^*_2\bb^*_1
+\cb^*_4\cb^*_3\bb^*_2\bb^*_1
\,.
\en
\medskip

These elements admit an alternative expression in terms 
of the other set of creation operators. 
 
\begin{lem} We have 
\bea
B_J=\sum_{I,K}C^I_{J,K}
\overline{\bb}^*_{n}\cdots \overset{i_1}{\overline{\cb}^*_{k_1}}\cdots 
\overset{i_l}{\overline{\cb}^*_{k_l}}\cdots \overline{\bb}^*_1(q^{2\al S(0)})\,.
\label{BJ2}
\ena
\end{lem}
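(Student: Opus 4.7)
The plan is to substitute the change-of-basis relations \eqref{bar2}, namely
$\bb^*_p=\sum_{p'=1}^p\hb^*_{p-p'}\overline{\bb}^*_{p'}$ and
$\cb^*_p=\sum_{p'=1}^p\hb^*_{p-p'}\overline{\cb}^*_{p'}$, into the right hand side of \eqref{BJ} and compare with \eqref{BJ2}. Because each $\hb^*_q$ is central by \eqref{Com1}, the $\hb^*$-factors produced by the substitution can be pulled outside the operator products. Organising the resulting expansion by a ``shift vector'' $(s_i)_{i=1}^n\in\Z_{\ge 0}^n$ that records how far the operator index at each position has been shifted down, the all-zero shift contribution reproduces \eqref{BJ2} verbatim. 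The lemma thus reduces to showing that every shift vector with $\sum_i s_i>0$ contributes zero.

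The mechanism behind this vanishing combines the anticommutativity \eqref{Com4} with a symmetry of the coefficients $C^I_{J,K}$ contained in the Cauchy formula \eqref{Cauchy}. A positive shift at a $\bb^*$-position $a$ produces $\overline{\bb}^*_{a-s}$; when $a-s$ is also occupied by an $\overline{\bb}^*$, the term dies at once by \eqref{Com4}, and similarly for $\overline{\cb}^*$. The surviving non-zero shifted terms come in pairs that differ by locally swapping a $\bb^*$-slot with a neighbouring $\cb^*$-slot: the partners carry opposite signs supplied by $[\overline{\bb}^*_p,\overline{\cb}^*_{p'}]_+=0$, and are weighted by equal Littlewood--Richardson coefficients thanks to the symmetry of the Cauchy kernel in its three index families. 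The mechanism is already visible in the smallest non-trivial case $n=3$, $l=1$, $J=\{2\}$, where a direct expansion shows every $\hb^*_1$-term cancels via the identity $\overline{\bb}^*_p\overline{\cb}^*_{p'}+\overline{\cb}^*_{p'}\overline{\bb}^*_p=0$.

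The cleanest global organisation is via the position-wise generating series
\[
\Phi(\z_1,\dots,\z_n;\eta_1,\dots,\eta_n)=\prod_{i=n}^{1}\bigl(\bb^*(\z_i)+\cb^*(\eta_i)\bigr),
\]
whose Cauchy-kernel-weighted coefficient (pairing positions with $(x,z)$-variables and specialising the $y$-variables to $J$) produces $B_J$. Using $\xb^*(\z)=\hb^*(\z)\overline{\xb}^*(\z)$ and the centrality of $\hb^*$, one decomposes $\Phi=\sum_{I\subset[1,n]}\bigl(\prod_{i\in\complement I}\hb^*(\z_i)\prod_{i\in I}\hb^*(\eta_i)\bigr)\,\overline{\Phi}_I$, and the antisymmetry of $C^I_{J,K}$ in $I$ and $K$ built into $\Delta(x)\Delta(y)\Delta(z)$ absorbs the extra $\hb^*$ factors in the weighted sum, leaving only the coefficient of $\overline{\Phi}=\prod_{i=n}^{1}(\overline{\bb}^*(\z_i)+\overline{\cb}^*(\eta_i))$, which is \eqref{BJ2}. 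The main obstacle is making this ``absorption'' rigorous uniformly in the shift patterns: individual cancellations are transparent, but controlling the interplay of signs, LR data, and Clifford relations \eqref{Com4} across all patterns requires careful bookkeeping.
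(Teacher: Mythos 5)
Your strategy---substitute $\bb^*_p=\sum_{p'\le p}\hb^*_{p-p'}\overline{\bb}^*_{p'}$, $\cb^*_p=\sum_{p'\le p}\hb^*_{p-p'}\overline{\cb}^*_{p'}$ into \eqref{BJ} and show that every term with a nonzero shift cancels---is a legitimate line of attack and does work in the cases you checked (e.g.\ for $J=\emptyset$ it gives $\bb^*_n\cdots\bb^*_1=\overline{\bb}^*_n\cdots\overline{\bb}^*_1$ immediately, since distinctness of the shifted indices forces the zero shift). But the proof is not complete: the cancellation of all positive-shift contributions for general $n$, $l$, $J$ \emph{is} the entire content of the lemma, and you have only asserted it. Your description of the mechanism (``surviving terms come in pairs that differ by locally swapping a $\bb^*$-slot with a neighbouring $\cb^*$-slot, with equal LR coefficients'') is not established and is not obviously true as stated: already in the $n=2$, $J=\{1\}$ case the two cancelling terms come from different summands $(I,K)=(\{1\},\{1\})$ and $(\{2\},\{2\})$ of the $I,K$-sum, so the pairing changes the $\cb^*$-index label as well as the slot; for $l\ge 2$ a single shifted monomial can arise from several $(I,K)$ with distinct LR weights, and one must show that the full signed, LR-weighted sum vanishes, not merely exhibit two-term pairs. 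The generating-function ``absorption'' argument in your last paragraph is also not precise enough to close this: the product $\prod_i\hb^*(\cdot)$ produced by the substitution has $n$ factors with position-dependent arguments, while the Cauchy kernel \eqref{Cauchy} only involves $l=|J|$ variables of each species, and you do not explain how the antisymmetry of $\Delta(x)\Delta(y)\Delta(z)$ makes these factors drop out.

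For comparison, the paper avoids this bookkeeping entirely by bringing in the \emph{annihilation} operators. It first rewrites \eqref{BJ} as $B_J=\frac{1}{l!^2}\bigl(\sum C^I_{J,K}\,\cb^*_{k_1}\bb_{i_1}\cdots\cb^*_{k_l}\bb_{i_l}\bigr)B_\emptyset$, passes to generating series, and integrates out half of the contour variables using the Cauchy determinant, arriving at \eqref{BJint}, an expression built solely from the diagonal combinations $\cb^*(\z_p)\bb(\z_p)$ acting on $B_\emptyset$. The lemma then follows from three facts: $B_\emptyset=\overline{\bb}^*_n\cdots\overline{\bb}^*_1(q^{2\al S(0)})$; the congruence $\cb^*(\z)\bb(\z)\equiv-\overline{\cb}^*(\z)(\tb^*_1)^{-1}\overline{\cb}(\z)$ modulo nonnegative powers of $\z^2-1$, which do not contribute to the contour integral; and the matching anticommutators \eqref{Com3} versus \eqref{Com5}. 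In other words, the whole construction is manifestly invariant under exchanging barred and unbarred alphabets, so no term-by-term cancellation ever has to be verified. If you want to keep your direct approach, the missing piece is a combinatorial identity equivalent to this symmetry, and you should expect it to require the full Cauchy structure of \eqref{Cauchy}, not just local swaps.
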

\begin{proof}
First we note that the definition \eqref{BJ} can be written as 
\be
&&B_J
=\frac{1}{l!^2}
\Bigl(\sum_{n\ge i_p,k_p\ge 1\atop 1\le p\le l}
C^I_{J,K}\cb^*_{k_1}\bb_{i_1}\cdots \cb^*_{k_l}\bb_{i_l}\Bigr)
B_\emptyset,
\\
&&B_\emptyset=\bb^*_n\cdots\bb^*_1(q^{2\al S(0)}). 
\\
\en
Actually the restriction $n\ge i_p,k_p$ is irrelevant, since otherwise the 
corresponding term is zero. 
Extending the suffix of $B_J$ by anti-symmetry, we consider their 
generating series. Inserting
\be
\cb^*_k\bb_i=\oint\frac{d\z^2}{2\pi i}\frac{d\xi^2}{2\pi i}
\cb^*(\z)\bb(\xi)(\z^2-1)^{-k}(\xi^2-1)^{i-1}
\en
and using \eqref{Cauchy}, we get 
\be
&&l!^2\sum_{j_1,\cdots,j_l\ge 1}
B_{j_1,\cdots,j_l}\prod_{p=1}^ly_p^{j_p-1}
=\oint\prod_{p=1}^l\frac{d\z^2_p}{2\pi i}\frac{d\xi^2_p}{2\pi i}
\prod_{p=1}^lz_p 
\frac{\Delta(x)\Delta(y)\Delta(z)}
{\prod_{i,j=1}^l(1-x_iy_j)(1-x_iz_j)}
\\
&&\qquad\qquad\qquad\quad\times 
\cb^*(\z_1)\bb(\xi_1)\cdots\cb^*(\z_l)\bb(\xi_l)B_\emptyset\,.
\en
Here we have set $z_p=(\z^2_p-1)^{-1}$, $x_p=\xi_p^2-1$, and 
the integral is taken along the contour encircling 
$\z^2_p=\xi^2_p=1$, $|\z^2_p-1|>|\xi^2_{p'}-1|$. 
Noting that 
\be
\frac{\Delta(x)\Delta(z)}{\prod_{i,j=1}^l(1-x_iz_j)}=
\det\left(\frac{1}{\z^2_p-\xi_{p'}^2}\right)_{1\le p,p'\le l}
\en
and integrating over $\xi^2_p$, we find 
\bea
&&(-1)^l\frac{l!}{\Delta(y)}
\sum_{j_1,\cdots,j_l\ge 1}B_{j_1,\cdots,j_l}
\prod_{p=1}^ly_p^{j_p-1}
\label{BJint}\\
&&=\oint\prod_{p=1}^l\frac{d\z^2_p}{2\pi i}
\prod_{i,j=1}^l\frac{1}{1-y_i(\z^2_j-1)}\ 
\cb^*(\z_1)\bb(\z_1)\cdots \cb^*(\z_l)\bb(\z_l)B_\emptyset\,.
\nn
\ena
On the other hand, we have
\be
&&\cb^*(\z)\bb(\z)\equiv -\overline{\cb}^*(\z)
(\tb^*_1)^{-1}\overline{\bb}(\z)\quad \bmod (\z^2-1)^0,
\\
&&\bb^*_n\cdots\bb^*_1=\overline{\bb}^*_n\cdots\overline{\bb}^*_1\,.
\en
Noting further that 
$[\overline{\bb}^*_p,(\tb^*_1)^{-1}\overline{\bb}_{p'}]_+=-\delta_{p,p'}$, 
we see from \eqref{BJint} that the definition \eqref{BJ} of $B_J$
is unchanged if we interchange barred and unbarred operators. 
\end{proof}

\begin{cor}\label{cor:BJ}
We have
\be
\supp B_J\subset [1,n]. 
\en
\end{cor}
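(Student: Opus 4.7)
The plan is to combine the two expressions for $B_J$---the original \eqref{BJ} and the barred version \eqref{BJ2} just established in the preceding Lemma---and to exploit the fact that the two families of creation operators enlarge the support in \emph{opposite} directions. Each expression by itself will yield only a one-sided support bound; their intersection will give the desired two-sided bound $[1,n]$.

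From the expression \eqref{BJ}, every summand is a product of exactly $n$ creation operators of type $\bb^*_p$ or $\cb^*_p$, applied in succession to the vacuum $q^{2\al S(0)}$. By the support property \eqref{supp2}, each such operator preserves the left endpoint of the support of its operand. Since $q^{2\al S(0)}\in(\cW^{(\al)})_{[1,0]}$ (its support is the empty interval rooted at $1$), every summand has left endpoint at least $1$, and so $\supp B_J\subset[1,+\infty)$. From the expression \eqref{BJ2}, every summand is instead a product of exactly $n$ creation operators of the barred type $\overline{\bb}^*_p$ or $\overline{\cb}^*_p$, again applied to the vacuum. By the support property \eqref{supp3}, each such operator grows the right endpoint by exactly $1$. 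After the $n$ applications, starting from right endpoint $0$, the right endpoint of every summand is at most $n$; hence $\supp B_J\subset(-\infty,n]$.

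Intersecting the two one-sided bounds yields $\supp B_J\subset[1,n]$, as claimed. There is no real obstacle at this stage: the substantive content lies in the preceding Lemma, which establishes that $B_J$ admits both an unbarred and a barred expression of the same combinatorial form. Once that equivalence is in hand, the corollary reduces to the routine bookkeeping sketched above.
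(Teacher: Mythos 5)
Your proposal is correct and is essentially identical to the paper's own argument: the paper likewise intersects the one-sided bound $\supp B_J\subset[1,\infty)$ coming from \eqref{BJ} and \eqref{supp2} with the bound $\supp B_J\subset(-\infty,n]$ coming from \eqref{BJ2} and \eqref{supp3}. Your write-up merely spells out the bookkeeping that the paper leaves implicit.
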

\begin{proof}
This follows from the two different expressions 
\eqref{BJ},\eqref{BJ2} and the 
support property \eqref{supp1},\eqref{supp3}:
\be
\supp B_J\subset [1,\infty)\cap (-\infty,n]=[1,n].
\en
\end{proof}

\subsection{Construction of a basis}

Consider the following set of elements of $\cW^{(\al)}$:
\bea
\overline{\bb}_M\cb_N(B_J)
\quad
(J\subset [1,n],; M\subset [1,n-|J|],\ N\subset [1,|J|]).
\label{base}
\ena
As $M,N,J$ vary there are altogether $4^{n}$ such elements. 
Due to Corollary \ref{cor:BJ}, and 
since annihilation operators preserve the support, 
they are all supported in $[1,n]$. 
\medskip

\noindent{\it Example.}
We omit writing $q^{2\al S(0)}$.
For $n=2$, we have
\be
B_\emptyset=\bb^*_2\bb^*_1,\ 
B_{\{2\}}=\cb^*_1\bb^*_1,\ 
B_{\{1\}}=\bb^*_2\cb^*_1+\cb^*_2\bb^*_1,\ 
B_{\{1,2\}}=\cb^*_2\cb^*_1. 
\en
The elements \eqref{base} give
\be
&& \bb^*_2\bb^*_1;\\
&& \bb^*_2,\ \bb^*_1,\ \tb^*_1\bb^*_2-\tb^*_2\bb^*_1,\ \tb^*_1\bb^*_1;\\
&& 1,\ \tb^*_1,\ \tb^*_2,\ 
(\tb^*_1)^2,
\ \cb^*_1\bb^*_1,\ \bb^*_2\cb^*_1+\cb^*_2\bb^*_1;
\\
&& \cb^*_2,\ \cb^*_1,\ \tb^*_1\cb^*_2-\tb^*_2\cb^*_1,\ \tb^*_1\cb^*_1;
\\
&& \cb^*_2\cb^*_1\,.
\en
\medskip

The following is the main result of this note. 

\begin{thm}
The elements \eqref{base} 
constitute a basis of 
$(\cW^{(\al)})_{[1,n]}$.
\end{thm}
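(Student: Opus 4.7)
\emph{Reduction to linear independence.}
The number of triples $(J,M,N)$ satisfying the conditions in \eqref{base} is $\sum_{l=0}^{n}\binom{n}{l}\,2^{n-l}\,2^{l}=4^{n}$. On the other hand, any element of $(\cW^{(\al)})_{[1,n]}$ of spin $s$ has the form $q^{2(\al-s)S(0)}\mathcal{O}$ with $\mathcal{O}$ a weight-$s$ local operator on $V_{1}\otimes\cdots\otimes V_{n}$, so summing over $s$ gives $\dim_{\C(q^{\al})}(\cW^{(\al)})_{[1,n]}=\dim\End(V_{1}\otimes\cdots\otimes V_{n})=4^{n}$. By Corollary~\ref{cor:BJ} together with the support-preservation of $\cb_{p}$ and $\overline{\bb}_{p}$ recalled in Section~\ref{sec2}, every element of \eqref{base} lies in $(\cW^{(\al)})_{[1,n]}$. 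Since the two counts coincide, it suffices to prove that the $4^{n}$ elements are linearly independent.

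\emph{Expansion in the monomial family.}
The plan for linear independence is to expand each $\overline{\bb}_{M}\cb_{N}(B_{J})$ in the linearly independent monomial family \eqref{monomial} supplied by Corollary~\ref{independence2}. Three tools drive this expansion. First, rewriting \eqref{bar1} using $\tb^*_{1}\hb^*_{p'}=\tb^*_{p'+1}$ gives $\overline{\bb}_{p}=-\sum_{p''\ge1}\tb^*_{p''}\bb_{p+p''-1}$, which turns $\overline{\bb}_{M}$ into an explicit sum of products $\tb^*_{q_{1}}\cdots\tb^*_{q_{|M|}}\bb_{m'_{1}}\cdots\bb_{m'_{|M|}}$ with $q_{j}\ge 1$. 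Second, the $\tb^*_{p}$ are central by \eqref{Com1}. Third, the anti-commutation relations \eqref{Com2}--\eqref{Com4} let the $\bb_{i},\cb_{j}$ contract against the $\bb^*_{p},\cb^*_{p}$ in the Littlewood--Richardson expansion \eqref{BJ} of $B_{J}$, and rearrange the surviving factors into the canonical order required by \eqref{monomial}. After these manipulations, $\overline{\bb}_{M}\cb_{N}(B_{J})$ becomes an explicit finite $\C(q^{\al})$-linear combination of monomials of the form $(\tb^*_{1})^{p}\tb^*_{q_{1}}\cdots\tb^*_{q_{r}}\bb^*_{A}\cb^*_{C}(q^{2\al S(0)})$.

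\emph{Leading filtration.}
Filter the expansion by the number of factors $\tb^*_{q}$ with $q\ge 2$. Modulo terms of positive such degree one has
\[
\overline{\bb}_{M}\cb_{N}(B_{J})\equiv(-\tb^*_{1})^{|M|}\,\bb_{M}\cb_{N}(B_{J}),
\]
so the leading component is a combination of monomials of the shape $(\tb^*_{1})^{|M|}\bb^*_{A}\cb^*_{C}(q^{2\al S(0)})$ with $|A|=n-|J|-|M|$ and $|C|=|J|-|N|$. Inside $B_{J}$, the identity $c^{\lambda}_{\mu,\emptyset}=\delta_{\lambda,\mu}$ (corresponding to $K=\{1,\dots,|J|\}$) singles out the distinguished summand $\pm\bb^*_{[1,n]\setminus J}\cb^*_{\{1,\dots,|J|\}}(q^{2\al S(0)})$ as the unique one whose $\cb^*$-indices equal $\{1,\dots,|J|\}$; the action of $\bb_{M}\cb_{N}$ on this and the other summands can then be computed explicitly via \eqref{Com3}.

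\emph{Main obstacle.}
The delicate step is to establish that the resulting transition matrix from \eqref{base} to \eqref{monomial} is triangular with non-zero diagonal for some total order on the triples $(J,M,N)$. Because $M\subset[1,n-|J|]$ rather than $M\subset[1,n]\setminus J$, the contraction of $\bb_{M}$ with the distinguished summand of $B_{J}$ can vanish, forcing one to track contributions from the non-distinguished summands of $B_{J}$ weighted by the Littlewood--Richardson coefficients $C^{I}_{J,K}$ with $K\ne\{1,\dots,|J|\}$; the small-$n$ computations in the text already exhibit this phenomenon. Carrying out this combinatorial bookkeeping --- and the interplay with the $\hb^*$-tail from \eqref{bar1} --- requires choosing an order on the triples (for example lexicographic after refining by $(|J|,|J|-|N|,J,M,N)$) under which the triangularity holds, and is the principal technical challenge. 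The triangularity should ultimately rest on the fact that $c^{\lambda}_{\mu,\nu}\ne 0$ implies $\lambda\supseteq\mu$ and $\lambda\supseteq\nu$ in the containment order of partitions.
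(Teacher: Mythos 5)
Your reduction to linear independence via the dimension count $\sum_{l}\binom{n}{l}2^{n-l}2^{l}=4^{n}=\dim\End(V_1\otimes\cdots\otimes V_n)$ is correct and makes explicit what the paper leaves implicit. However, the linear independence argument itself has a genuine gap: the entire proof hinges on the triangularity of the transition matrix from \eqref{base} to \eqref{monomial}, and you do not establish it --- you correctly identify it as ``the principal technical challenge'' and leave it as such. Worse, your own analysis shows why the naive leading term fails: since $M\subset[1,n-|J|]$ rather than $M\subset[1,n]\setminus J$, the contraction of $\bb_M$ against the distinguished summand $\pm\bb^*_{[1,n]\setminus J}\cb^*_{\{1,\dots,|J|\}}$ of $B_J$ can vanish, so the putative diagonal entry of your filtration can be zero and one is forced into the uncontrolled bookkeeping of the non-distinguished Littlewood--Richardson terms together with the $\hb^*$-tails of \eqref{bar1}. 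The closing appeal to $c^{\lambda}_{\mu,\nu}\ne0\Rightarrow\lambda\supseteq\mu,\nu$ is a heuristic, not an argument, so the proof is incomplete as it stands.

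The paper circumvents this difficulty entirely by a duality (pairing) argument rather than a transition-matrix argument. Assuming a relation $\sum A_{M,N,J}\overline{\bb}_M\cb_N(B_J)=0$, one first splits it by $l=|J|$, then, for fixed $M_0\subset[1,n-l]$ and $N_0\subset[1,l]$, applies the \emph{complementary} annihilation operators $\overline{\bb}_{[1,n-l]\setminus M_0}\cb_{[1,l]\setminus N_0}$ to both sides. The biorthogonality relations $[\cb^*_p,\cb_{p'}]_+=\delta_{p,p'}$ and $[\overline{\bb}^*_p,(\tb^*_1)^{-1}\overline{\bb}_{p'}]_+=-\delta_{p,p'}$ then kill all terms except $K=\{1,\dots,l\}$ in the expansion of $B_J$, where $C^{I}_{J,\{1,\dots,l\}}=\delta_{I,J}$, and the surviving expression collapses to $\sum_J A_{M_0,N_0,J}\det\bigl(\hb^*_{j'_a-b}\bigr)(q^{2\al S(0)})$, a combination of Schur polynomials in the $\hb^*_p$. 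Linear independence of these is exactly Lemma \ref{independence} (the free-fermion computation at $q=\sqrt{-1}$), which is the real engine of the proof and which your proposal does not invoke at this stage. If you want to salvage your route, you would need to supply the full triangularity proof; otherwise you should switch to the pairing argument, which replaces the combinatorial bookkeeping by the exact contraction identities \eqref{Com3}, \eqref{bar1} and \eqref{Com5}.
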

\begin{proof}
It is sufficient to show that the set \eqref{base} is linearly independent. 
Suppose there is a linear relation
\be
\sum_{M,N,J}A_{M,N,J}\overline{\bb}_M\cb_N(B_J)=0
\en
with some scalars $A_{M,N,J}$. 
The sum is taken over $J\subset [1,n]$, 
$M\subset[1,n-|J|]$, $N\subset[1,|J|]$. 
The left hand side is a linear combination of monomials 
containing $l=|J|$ number of $\bb^*$ and 
$\tb^*$. 
Hence the sum is separately zero for each fixed $l$. 

Fixing $M_0\subset[1,n-l]$, $N_0\subset[1,l]$, we apply  
$\overline{\bb}_{[1,\cdots,n-l]\backslash M_0}\cb_{[1,l]\backslash N_0}$ to 
both sides. 
We obtain
\be
&&0=\sum_{|J|=l}A_{M_0,N_0,J}\overline{\bb}_{\{1,\cdots,n-l\}}\cb_{\{1,\cdots,l\}}
\\
&&\quad\times
\sum_{I,K}C^{I}_{J,K}
\cb^*_{k_l}\bb_{i_l}\cdots
\cb^*_{k_1}\bb_{i_1}\bb^*_n\cdots\bb^*_1(q^{2\al S(0)}).
\en
In the second sum only the term with 
$K=\{1,\cdots,l\}$ contributes.
Noting that $C^{I}_{J,\{1,\cdots,l\}}=\delta_{I,J}$, we obtain 
\be
0=\sum_{J}A_{M_0,N_0,J}
\det\left(\hb^*_{j'_a-b}\right)_{1\le a,b\le n-l}
(q^{2\al S(0)}),
\en
where $\{j'_1,\cdots,j'_{n-l}\}$ denotes the complement of 
$J=\{j_1,\cdots,j_n\}$ in $[1,n]$. 
By Lemma \ref{independence}, 
Schur functions in $\hb^*_p$'s applied to 
$q^{2\al S(0)}$ are linearly independent. 
Hence we conclude that $A_{M_0,N_0,J}=0$ for all $J$. 
This completes the proof. 
\end{proof}
\bigskip

{\it Acknowledgements.}\quad
HB is grateful to the Volkswagen Foundation and to the
'Graduiertenkolleg' DFG project:
"Representation theory  and its application in
mathematics and physics" for the financial support.
Research of MJ is supported by the Grant-in-Aid for Scientific 
Research B-20340027 and B-20340011. 
Research of TM is supported by
the Grant-in-Aid for Scientific Research B--17340038.
Research of FS is supported by  EC networks   "ENIGMA",
contract number MRTN-CT-2004-5652
This work was supported by World Premier International
Research Center Initiative (WPI Initiative), MEXT, Japan.

\bigskip

\end{document}